\newcommand{\fp}{\phi}
\title{Cuckoo Filter: Simplification and Analysis}
\author{David Eppstein\footnote{Supported in part by NSF grant 1228639. The author would like to thank Michael Mitzenmacher for introducing him to cuckoo filters, and for helpful discussions on this work.}}
\affil{Computer Science Department, University of California, Irvine, USA}
\authorrunning{David Eppstein}
\subjclass{F.2.2 Nonnumerical Algorithms and Problems}
\keywords{approximate set, Bloom filter, cuckoo filter, cuckoo hashing}
\begin{document}
\maketitle

\begin{abstract}
The cuckoo filter data structure of Fan, Andersen, Kaminsky, and Mitzenmacher (CoNEXT 2014)
performs the same approximate set operations as a Bloom filter in less memory, with better locality of reference, and adds the ability to delete elements as well as to insert them. However, until now it has lacked theoretical guarantees on its performance. We describe a simplified version of the cuckoo filter using fewer hash function calls per query. With this simplification, we provide the first theoretical performance guarantees on cuckoo filters, showing that they succeed with high probability whenever their fingerprint length is large enough.
\end{abstract}

\section{Introduction}

Bloom filters~\cite{Blo-CACM-70} are a very widely used data structure for approximately representing sets using low space. At the cost of $O(1)$ bits per element, they can represent any set, with constant-time membership testing, no false negatives, and an arbitrarily low false positive rate controlled by the bits per element. Recently, Fan, Andersen, Kaminsky, and Mitzenmacher~\cite{FanAndKam-CoNEXT-14} proposed an alternative data structure for the same purpose, the \emph{cuckoo filter}. They show experimentally that cuckoo filters are better than Bloom filters in several important ways: they use (up to lower-order terms) 30\% less space for the same false positive rate, matching the information-theoretic lower bound. They have better locality of reference, accessing only two contiguous blocks of memory per query rather than the larger numbers of a typical Bloom filter. And, unlike a Bloom filter, they can handle element deletions as well as insertions and queries without any increase in storage. These good features have already led to the use of cuckoo filters in several applications~\cite{GupBre-ICDF2C-15,GriYavHam-WSCNIS-15}. (For a different and more theoretical replacement for Bloom filters with similar advantages, see Pagh, Pagh, and Rao~\cite{PagPagRao-SODA-05}.)

A cuckoo filter uses a hash table to store a small \emph{fingerprint} for each element, and answers queries by testing whether the fingerprint of the queried element is present. Each element has two hash table cells  where its fingerprints might be stored, determined by a combination of a hash of the element and a second hash of the fingerprint. As in cuckoo hashing~\cite{PagRod-JAlg-04}, fingerprints already stored in the table may be moved to their second location to make room for a newly inserted fingerprint. The performance of a cuckoo filter is controlled by the number $n$ of elements in the set it represents, together with three design parameters: the table size $N$ (number of cells), block size $b$ (fingerprints that can be stored in a single cell), and fingerprint size $f$ (bits per fingerprint). A good choice of these parameters  allows the fingerprints for all elements in the given set to be stored in the table, giving a data structure whose false positive rate $\epsilon$ (the probability that an element not in the set is falsely reported to be in the set) can easily be bounded by $\epsilon\le 2b/(2^f-1)$. For bad choices of parameters, or unlucky choices of hash function, the data structure may fail, being unable to store all its elements' fingerprints. Therefore, it is important to analyze the likelihood of a failure, and to understand which combinations of parameters have a guaranteed low failure probability.

We may define the \emph{load factor} $\frac{n}{bN}$ to be the ratio of the number of stored fingerprints to the number that could be stored. As this number will be close to one, it is convenient to represent its difference from one as a parameter $\delta$, with $1-\delta=\frac{n}{bN}$.
The experiments of Fan et al.~\cite{FanAndKam-CoNEXT-14},
show that the load factor can be made arbitrarily close to one while keeping the failure rate low, by a large enough choice of the block size~$b$. With this choice and a small false positive rate $\epsilon$, the storage cost is $\bigl(1+o(1)\bigr)\log(1/\epsilon)$ bits per element, matching the information-theoretic lower bound on any approximate set data structure. However, this combination of low storage costs and low failure rate has been observed only in experiments.
The only theoretical analysis so far, also by Fan et al.~\cite{FanAndKam-CoNEXT-14}, is a lower bound showing that $f$ must be $\Omega((\log n)/b)$  to have a low failure probability. They did not provide any matching upper bound showing that some combination of parameters can ensure a low failure probability.

In this paper we provide the first theoretical guarantees on the performance of cuckoo filters. To do so, we describe a simplified version of cuckoo filters, in which we determine the two cells for an element by using its fingerprint directly rather than by using a hash of its fingerprint. This simplification had previously been considered by Fan et al.~\cite{FanAndKam-CoNEXT-14}, but they discarded it without publishing any experimental test results for it. Like Fan et al., we do not expect this simplification to be a practical improvement, but it makes the data structure more amenable to analysis. Under this simplification, we show that the cuckoo filter has (up to lower-order terms, for the same choice of block size, with polynomially small failure probability) nearly the same maximum load rate as the blocked cuckoo hash table of Dietzfelbinger and Weidling~\cite{DieWei-TCS-07}, as long as a constraint that $f=\Omega((\log n)/b)$ is also satisfied. In particular, this analysis allows for the load factor to be arbitrarily close to one, controlled by the block size~$b$. Thus, $f=\Omega((\log n)/b)$ is both a necessary and a sufficient condition for the high-probability success of cuckoo filters.
We also describe how to add a stash to the simplified cuckoo filter, allowing the cuckoo filter to take advantage of the improved reliability of cuckoo filtering with a stash~\cite{KirMitWie-SICOMP-10} without any change in the false positive rate.

Our analysis uses the unrealistic assumption of a uniformly random hash function. However, it uses only two properties of this function: that blocked cuckoo hashing using it succeeds with high probability, and that with high probability it balances the load of a set of balls distributed into a significantly smaller number of bins. Therefore, it is likely that, if the analysis of cuckoo hashing with realistic hash functions~\cite{PatTho-JACM-12,AumDieWoe-Algo-14} is extended to blocked cuckoo hashing, the same analysis can also be extended to cuckoo filters.
It also seems likely that the original version of cuckoo filters behaves at least as well as the simplified version, but we leave the problem of proving this as open for future research. Our algorithm for cuckoo filtering with a stash depends in an essential way on the structure of the simplified cuckoo filter, so extending it to the original cuckoo filter also remains open.

\section{Preliminaries}

We begin by briefly reviewing the Bloom filter, whose operations the cuckoo filter emulates, and the cuckoo hash table on which the organization of a cuckoo filter is based. We then describe the cuckoo filter itself, in the original version given by Fan et al.

As a notational convenience, we use $\log$ without a base to refer to the binary logarithm $\log_2$. We will also use the natural logarithm, denoted by $\ln$.

\subsection{Bloom filter}

A Bloom filter represents a set of $n$ elements by an array of $N$ cells ($N>n$), each containing a single bit of information, together with a hash function mapping the potential elements of the set to $k$-tuples of cells (for a chosen constant parameter value $k$).  A cell contains a nonzero bit if at least one of the elements is mapped to it by the hash function, and a zero bit otherwise. To insert an element into the set, the hash function is used to find its cells, and all of these cells are set to nonzero. To query whether an element belongs to a set, all of its cells are examined, and the result of the query is positive if and only if they are all nonzero. There is no deletion operation.

A false positive occurs if an element that is not part of the given set coincidentally has all of its cells nonzero. For a given choice of $N$, and a given set size $n$, the optimal false positive rate is achieved by setting $k\approx \frac{N \ln 2}{n}$, so that with high probability approximately half of the cells in the table are nonzero.
With these choices, the false positive rate is approximately $2^{-k}$.
Inverting this calculation, the Bloom filter data structure achieves a false positive rate of $\epsilon$
using approximately $\frac{1}{\ln 2}\log(1/\epsilon)\approx 1.44\log(1/\epsilon)$
bits of storage per element~\cite{FanAndKam-CoNEXT-14}.

Many extensions of Bloom filters have been studied. For instance, a \emph{counting Bloom filter}~\cite{FanCaoAlm-TN-00} stores a counter instead of a bit per cell; it can handle deletions, and can also be used (with a smaller number of cells) as a \emph{count-min sketch} to estimate the frequency of items in a data stream~\cite{CorMut-JAlg-05}. An \emph{invertible Bloom filter} adds even more information per cell in order to be able to recover the identities of the set elements stored in it, when there are few enough elements; it also allows deletions, and can be used to find stragglers in a data stream~\cite{EppGoo-TKDE-11}, or as a sketch to communicate the symmetric difference of two similar sets using an amount of communication proportional to the difference~\cite{EppGooUye-SIGCOMM-11}. However, these methods blow up the size of the data structure by a nonconstant factor, and so are less suitable for the original task of the Bloom filter, of representing approximate sets using very little memory.

\subsection{Cuckoo hashing}

Cuckoo filters are based on \emph{cuckoo hashing}, one of many hashing based techniques for maintaining a collection of key--value pairs and looking up the value associated with a query key~\cite{PagRod-JAlg-04}. Cuckoo hashing is a form of open addressing, a family of hashing techniques in which each cell of a hash table stores a single key--value pair. In cuckoo hashing, each key has only two locations in which it may be stored, which are determined by a hash function. Thus, answering a query is simple: look in those two cells and test whether either cell contains the query key.

Inserting a key into a cuckoo hash table is more complicated. If one of the two cells for the key is empty, it can be inserted there. But otherwise, one of the other keys occupying one of these two cells must be kicked out, to make way for the new key. The kicked-out key must then be re-inserted into its second location, possibly kicking out another key there, and so on. This process will either eventually terminate with all keys stored in one of their two cells, or it may fail and force the data structure to be rebuilt. A failure may occur, for instance, when some set of $q$ keys is mapped to fewer than $q$ cells, so there is not enough room to store all of these keys in their cells.

In analyzing this structure, we make the standard assumption that the two cells for each key are chosen uniformly at random, independently from each other and from all the other keys. However, there has also been much research on practical hash function algorithms that do not obey this assumption but nevertheless can be made to work with cuckoo hashing~\cite{DieSch-SODA-09,PatTho-JACM-12}.

Two of the shortcomings of this basic version of cuckoo hashing are that the failure probability is only moderately small (proportional to $1/n^2$ per insertion, for a hash table with a constant load factor, rather than being adjustable to arbitrary inverse polynomials) and that the load factor it can tolerate while achieving this failure probability is also bounded away from $1$ (in fact, bounded below $1/2$). Because of these issues, researchers have investigated modifications of cuckoo hashing that can tolerate higher loads with improved failure probabilities. For the results that we report on in this paper, we need to understand two such modifications, blocked cuckoo hashing~\cite{DieWei-TCS-07} and cuckoo hashing with a stash~\cite{KirMitWie-SICOMP-10}.

\subsection{Blocked cuckoo hashing}

Blocked cuckoo hashing was initially developed by Dietzfelbinger and Weidling~\cite{DieWei-TCS-07}; we follow here its description by Kirsch et al.~\cite{KirMitWie-SICOMP-10}. In blocked cuckoo hashing, each cell of the hash table stores a block of up to $b$ different key--value pairs, for a parameter $b$ chosen as part of the implementation or initialization of the data structure. A query may examine all of the pairs in the two cells that it searches; however, the locality of reference of the query is still as good as in the original version of cuckoo hashing.

When a key is inserted, and one of its two cells is not full (has fewer than $b$ keys already stored in it) it may be placed directly in that cell. However, when both of its cells are full, one of the keys already placed in one of those cells must be kicked out, and moved to its other location. As with standard cuckoo hashing, this move may cause another key to move, possibly creating a chain of dislocations. This sequence of moves can alternatively be viewed as an augmenting path in a graph whose vertices are table cells and whose edges are the pairs of cells that each key maps to. Kirsch et al. write that, for the analysis of the failure probability of this algorithm, it is unimportant how the augmenting path is found, but that an analysis of Dietzfelbinger and Weidling~\cite{DieWei-TCS-07} shows constant expected time (in the event of no failures) for a breadth-first algorithm for finding these augmenting paths.

In order to achieve a load factor of $1-\delta$, blocked cuckoo hashing may be used with any block size $b\ge 1+\frac{\ln(1/\delta)}{1-\ln 2}$. Thus, for constant $\delta$, the block size is also a constant. With this block size and load factor, the failure probability per insertion is $O(1/n^b)$~\cite{KirMitWie-SICOMP-10}.

\subsection{Cuckoo hashing with a stash}

A \emph{stash} is a small collection of key--value pairs that have not been included in a cuckoo hash table.  In cuckoo hashing with a stash, the stash is used to store key--value pairs whose insertion would otherwise cause the hash table to fail. To perform a query in a cuckoo hash table with a stash (in either the original or blocked form of cuckoo hashing) one first checks the two cells that can contain the query key. Then, if the key is not found in either of those two locations, and both locations are full, the stash is also searched. This causes an additional sequence of memory accesses for unsuccessful searches (or for searches of keys already in the stash), but does not slow down most searches, and in many cases greatly improves the reliability of this structure~\cite{KirMitWie-SICOMP-10}.

To apply this technique to cuckoo filtering, we need an analysis of blocked filtering with a stash. Kirsch et al.~\cite{KirMitWie-SICOMP-10} claimed that, for blocked cuckoo hashing with block size $b$ and a stash that can hold $\sigma$ key--value pairs, the failure probability per insertion is $O(1/n^{(\sigma+1)(b-1)+1})$. Unfortunately, this multiplicative improvement in the exponent of failure probability, compared to the version without the stash, is incorrect. As Martin Dietzfelbinger and Michael Rink observed, a failure mode in which $\sigma+2b+1$ keys all map to the same pair of cells already causes the failure probability to be much larger than this bound.\footnote{Michael Mitzenmacher, personal communication, February 7, 2016.} Therefore, any improvement to the reliability of blocked cuckoo hashing, obtained by adding a stash, should be considered conjectural.

\subsection{Cuckoo filter}

A \emph{cuckoo filter}~\cite{FanAndKam-CoNEXT-14} modifies the blocked cuckoo hash table by storing a small \emph{fingerprint} for each element in a set, instead of storing a key--value pair. As in the blocked cuckoo hash table, each cell of the table can store a small number $b$ of (fingerprints of) elements. For each element there are two cells at which its fingerprint could be stored.
To test whether an element belongs to the filter, we examine these two cells and report yes when a matching fingerprint is found in one of them.

In order to pack its fingerprints into these cells, the cuckoo filter (like a cuckoo hash table) may sometimes move them to the other location for their key. However, when it does this, it will not know the  key from which the fingerprint was generated. Therefore, it must determine the other location for a fingerprint from the fingerprint alone. This limitation means that the two locations for the fingerprint of a key can no longer be chosen independently of each other, complicating the analysis of this data structure.

To keep things simple, we will assume that the number of cells $N$ in the cuckoo filter is a power of two.\footnote{It is tempting to try modular addition in place of exclusive ors to extend this method to other choices of $N$, but this fails because exclusive or is an involution and modular addition isn't. However, the simplified cuckoo filter that we describe later can have a number of subtables that is not a power of two.} Thus, if $x$ and $y$ are indices into the table (numbers in the range from $0$ to $N-1$ inclusive), we can combine them by a bitwise exclusive or operation giving a number $x\oplus y$ that is also an index into the table.
Cuckoo filtering depends on three hash functions, $\fp$, $h_1$, and $h_2$, assumed (for purposes of analysis) to be independent random functions:
\begin{itemize}
\item Function $\fp$ maps each potential set element to its fingerprint, an $f$-bit binary number. It is convenient to restrict $\fp$ to have nonzero values so that a zero fingerprint can be used to mark an unused cell of the cuckoo filter; this has no significant effect on the asymptotic behavior of the structure.
\item Function $h_1$ maps each potential set element to a number from $0$ to $N-1$. This gives the location of one of the two hash table cells into which the fingerprint for that element can be placed.
\item Function $h_2$ maps fingerprints to numbers in the range from $1$ to $N-1$ (inclusive). This is not the index of a hash table cell, but rather the difference (or more precisely the bitwise exclusive or) of any pair of locations at which that fingerprint should be stored.
\end{itemize}
Thus, using these functions, the fingerprint $\fp(x)$ for any element $x$ of the given set will be stored either in cell $h_1(x)$ or cell $h_1(x)\oplus h_2(\fp(x))$ of the hash table. If some subset of two or more elements all have the same fingerprint and are all mapped by $h_1$ to the same cell, then each of the fingerprints for these elements will be stored separately in one of the two cells for these elements, even when that would cause two copies of the same fingerprint to be stored in the same cell. That is, in order to make deletions possible, we do not allow different elements to share a copy of a stored fingerprint.

To query whether a value $x$ is a member of the set represented by a cuckoo filter, we examine the $2b$ fingerprints stored at cells $h_1(x)$ and $h_1(x)\oplus h_2(\fp(x))$, and check whether any of them equal $\phi(x)$; if so we report that $x$ is indeed a member. Thus, this check always answers correctly when $x$ belongs to the set, and may give a false positive with probability at most $2b/2^{f-1}$ when $x$ does not belong to the set. This query may be performed using a number of binary word operations (including multiplication of binary numbers) proportional to the number of words needed to store the two cells, rather than being proportional to the larger number of fingerprints stored in these cells; see \autoref{sec:bit-parallel} for details.

To insert a value $x$ into the set, we compute its fingerprint and place this fingerprint into one of the two cells associated with $x$, possibly relocating other fingerprints along an augmenting path in the graph of cells and pairs of cells associated with the elements of the set. The second location for each fingerprint can be calculated from its first location and the fingerprint itself, without needing to know the element from which the fingerprint is generated; otherwise, the insertion operation proceeds exactly as in the blocked cuckoo hash table. Following Kirsch et al.~\cite{KirMitWie-SICOMP-10}, we do not specify precisely how the augmenting path is to be found, except to note that one possible choice for finding it is to use a breadth first search.

To remove a value $x$ from the set, we find a matching fingerprint for $x$, and remove it from its cell. If there are multiple matching fingerprints, we remove only one copy; it does not matter whether that copy is the one created when $x$ was inserted.

\section{The simplification and its graph}

In the simplified version of cuckoo filters that we analyze here, we omit the second hash function, $h_2$. Thus, given any set element~$x$, the two locations in which we may store the fingerprint $\fp(x)$ are the cells with index $h_1(x)$ and $h_1(x)\oplus\fp(x)$. Other than this change, the operation of the simplified cuckoo filter remains the same as in the original version.

\begin{figure}[t]
\centering
\includegraphics[width=0.75\textwidth]{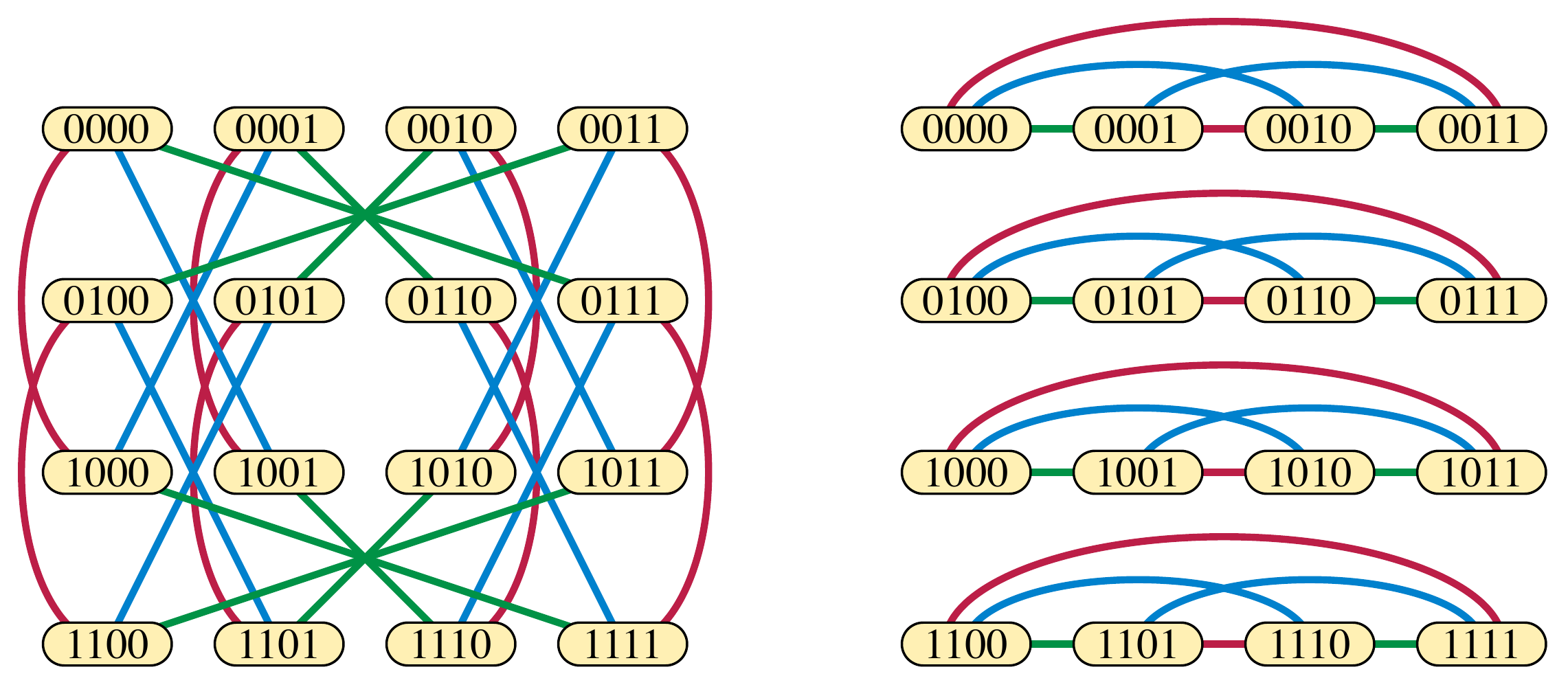}
\caption{The graph of cuckoo filter cells, and pairs of cells that can be the two locations of any fingerprint, for a cuckoo filter with 16 cells and two-bit fingerprints. The edges for each fingerprint are colored green for fingerprint $01$, blue for fingerprint $10$, and red for fingerprint $11$. Left: the original cuckoo filter, with $h_2(01)=0111$, $h_2(10)=1001$, and $h_2(11)=1000$. Right: the simplified cuckoo filter without $h_2$.}
\label{fig:filter-graph}
\end{figure}

The effect of this simplification may be visualized using a graph whose vertices are the cells of the hash table, and whose edges connect the possible pairs of locations of a single fingerprint (\autoref{fig:filter-graph}). In the original cuckoo filter, the endpoints of each edge differ (in their exclusive or) by one of the values of the hash function $h_2$. The resulting graph is regular, as each vertex has one incident edge for each possible hash function value, and has the symmetries of a hypercube. When $2^f>\log n$, the graph has  probability $\Omega(1)$ of being connected, in which case it is a Cayley graph of the group generated by exclusive ors of the values of $h_2$. For smaller values of $f$, with probability $\Omega(1)$ the hashes of all the fingerprints will be independent vectors over the two-element field, and if so the graph will be a disjoint union of $N/2^{2^f-1}$ hypercube subgraphs with $2^{2^f-1}$ vertices per hypercube.
However, when the second hash function $h_2$ is omitted, the corresponding graph of cell pairs is much less well-connected: it is a disjoint union of $N/2^f$ subgraphs, each of which is a clique. The graph of the simplified cuckoo filter is not connected unless $f>\log n$, a much stronger requirement than is needed for the likely connectivity of the graph of the original cuckoo filter.

This simplification had previously been considered by Fan et al.~\cite{FanAndKam-CoNEXT-14}. However, they discarded it without publishing any test results for it.
They write:
\begin{quotation}
``If the alternate location were calculated \dots without hashing the fingerprint, the items kicked out from nearby buckets would land close to each other in the table \dots{} Hashing the fingerprints ensures that these items can be relocated to buckets in an entirely different part of the hash table, hence reducing hash collisions and improving the table utilization.''
\end{quotation}

Although we agree with this criticism, the simplification has the advantage that it makes the method more amenable to analysis. In particular, we can bound the failure probability of the simplified cuckoo filter by treating each connected component of the graph described above as an independent cuckoo hash table. Within each component, all pairs of table cells are equally likely to be chosen by any element that maps to that component, so the previous analysis of cuckoo hash tables may be applied directly.  Although we expect the failure probability for the original cuckoo filter to be at least as good as for the simplified version, we do not know how to extend our analysis to it.

We remark that, instead of randomly choosing a function $h_2$, and constructing a graph of cells and fingerprint-edges based on it, it would be possible to develop a generalized version of cuckoo filtering whose graph is any desired $(2^f-1)$-regular graph on the cells of the cuckoo filter. If this graph is also $(2^f-1)$-edge-colorable, this may be done by associating each fingerprint with an edge color, and making the two cells that can store the fingerprint of an element $x$ be $h_1(x)$ and the neighbor of $h_1(x)$ along the edge with color $\fp(x)$. In the more general case, in which the graph is not $(2^f-1)$-edge-colorable, this may be done by choosing a one-to-one correspondence between fingerprints and outgoing locations at each cell, and either storing $\fp(x)$ in $h_1(x)$ or storing $\varphi(x)$ in the neighbor of $h_1(x)$ associated with fingerprint $\fp(x)$, where $\varphi(x)$ is the fingerprint associated with the edge back to $h_1(x)$. In this way, for instance, it would be possible to make the graph connected even when $2^f\le\log n$, something that is not possible for the original cuckoo filter.

\section{Analysis}

We have seen by the graphical analysis in the previous section that, in the simplified version of cuckoo filtering that we study here, the table of cells can be partitioned into connected components of the fingerprint-edge graph, each of which is a clique. A single connected component consists of $2^f$ cells. Each two cells within a connected component have locations that differ only within their least significant $f$ bits. We call each of these connected components a \emph{subtable}. Essentially, each subtable is itself a cuckoo filter, on a subset of the input data. We analyze two different failure modes of the global filter: either it can fail to uniformly distribute the set elements to its subtables, or the cuckoo filtering within a subtable can fail.

\subsection{Even distribution into subtables}

We assume a cuckoo filter representing $n$ elements, with $b$ fingerprints per cell and
$N=(n/b)/(1- \delta-\delta^2)$ cells where $\delta$ and $b$ are related as $b\ge 1+\frac{\ln(1/\delta)}{1-\ln 2}$, the same relation used by Kirsch et al. (Prop.~4.1) for blocked cuckoo hashing~\cite{KirMitWie-SICOMP-10}. We also assume that the fingerprints of the cuckoo filter have $f$ bits each, so that the cuckoo filter may be partitioned into $N/2^f$ subtables of $2^f$ cells per subtable.

The subtable into which an element $x$ is mapped is given by the most significant $-f+\log N$ bits of $h_1(x)$.  Assuming that $h_1$ is a random hash function, the probability that each element falls into a particular subtable $S$ is $2^f(1/N)$, and the events that elements fall into a subtable (for different elements) are mutually independent. Thus, the number of elements that are mapped into subtable $S$ is the sum of $n$ i.i.d. Bernoulli random variables with probability $p=2^f(1/N)$ of being $1$ and probability $1-p$ of being $0$. The expected number $\mu$ of elements that are mapped into subtable $S$ is $2^f (n/N) = 2^f b(1- \delta-\delta^2)$.

We will say that a subtable is \emph{overfull} when the number of elements mapped to it exceeds $\mu(1-\delta)/(1-\delta-\delta^2)=\mu(1+\delta^2+o(\delta^2))$, giving it a load factor greater than $1-\delta$. By a standard form of the multiplicative Chernoff bound, for $\delta$ sufficiently smaller than 1, the probability that $S$ is overfull is at most $\exp\bigl(-\delta^4\mu/3\bigr)$.
To achieve probability $1/n^s$ of avoiding any overfull subtable, it suffices (by the union bound) to achieve probability $1/n^{s+1}$ that one table $S$ is overfull.
Plugging $\mu\ge 2^f b$ into the probability that $S$ is overfull and solving for $f$ gives that no table is overfull, with high probability $\ge 1-1/n^s$, whenever
\[
f\ge \log\left(\frac{3(s+1)\ln n}{\delta^4 b}\right)
=\log\left(\frac{\log n}{b}\right)+O(1).
\]
where the simplification on the right hand side is based on the assumption that $\delta$ and $s$ are constants. This $\Omega\bigl(\log((\log n)/b)\bigr)$ constraint on~$f$ will be insignificant in comparison with the $\Omega\bigl((\log n)/b\bigr)$ constraint coming from the failure probability within each subtable.

We summarize the results of this subsection as a lemma:

\begin{lemma}
Let $n$, $N$, $b$, $f$, and $\delta$ be as above.
Suppose also that $f=\log\bigl((\log n)/b\bigr)+\Omega(1)$.
Then the probability that a simplified cuckoo filter with these parameters has any overfull subtables
is polynomially small, with an exponent that can be made arbitrarily large by using a larger constant factor in the $\Omega$-notation of the bound for~$f$.
\end{lemma}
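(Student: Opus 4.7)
The plan is essentially to formalize the Chernoff-plus-union-bound argument that is sketched in the paragraphs immediately preceding the lemma. First, I would fix a single subtable $S$ and count $X_S$, the number of input elements whose $h_1$-value lands in $S$. Because $h_1$ is assumed to be a random function and each element is hashed independently, $X_S$ is a sum of $n$ i.i.d.\ Bernoulli random variables with success probability $p=2^f/N$. Plugging in $N=(n/b)/(1-\delta-\delta^2)$ gives the mean $\mu=\mathbb{E}[X_S]=2^f b(1-\delta-\delta^2)$, and an overfull subtable corresponds exactly to the event $X_S>\mu(1-\delta)/(1-\delta-\delta^2)=\mu(1+\eta)$ with $\eta=\delta^2/(1-\delta-\delta^2)=\delta^2+o(\delta^2)$.

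Next I would apply the multiplicative Chernoff bound in the form $\Pr[X_S>(1+\eta)\mu]\le\exp(-\eta^2\mu/3)$, valid for $\eta$ bounded away from $1$, which is guaranteed since $\delta$ is a small constant. Using $\eta\ge\delta^2/2$ (for $\delta$ small enough) and $\mu\ge 2^f b$, this single-subtable failure probability is at most $\exp(-\delta^4\,2^f b/12)$. To get per-subtable failure probability at most $n^{-(s+1)}$ it therefore suffices that $2^f\ge\frac{12(s+1)\ln n}{\delta^4 b}$, which rearranges to $f\ge\log((\log n)/b)+C(s,\delta)$ for an explicit constant $C(s,\delta)$. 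This is precisely the hypothesis $f=\log((\log n)/b)+\Omega(1)$, with the additive constant absorbing $\log(12(s+1)/\delta^4)$.

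Finally, since there are $N/2^f\le n$ subtables, a union bound over all subtables upgrades the $n^{-(s+1)}$ per-subtable bound to an $n^{-s}$ bound on the probability that \emph{any} subtable is overfull. Choosing the constant in the $\Omega(1)$ slack for $f$ to be large enough makes $s$ an arbitrary constant, which gives exactly the polynomially small failure probability with arbitrarily large polynomial exponent claimed by the lemma.

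The only mildly delicate step is checking that the Chernoff bound is being used in a regime where it is valid, namely that the deviation parameter $\eta$ is a positive constant smaller than $1$; this is immediate from the assumption that $\delta$ is a constant strictly less than (a small constant times) $1$, so there is no real obstacle. Everything else is bookkeeping: substituting the given values of $N$ and $\mu$, inverting the Chernoff exponent to solve for $f$, and applying the union bound.
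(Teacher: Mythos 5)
Your proof is correct and follows essentially the same route as the paper: a multiplicative Chernoff bound on the binomial count of elements per subtable, with the deviation parameter $\delta^2+o(\delta^2)$ coming from the gap between load factors $1-\delta-\delta^2$ and $1-\delta$, followed by a union bound over the at most $N/2^f\le O(n)$ subtables and solving the exponent for $f$. The only differences are immaterial constants (your $\eta\ge\delta^2/2$ versus the paper's direct use of $\delta^2$, giving $12$ instead of $3$ in the exponent), which are absorbed into the $\Omega(1)$ additive slack in $f$.
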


\subsection{Failure probability within each subtable}

Let $S$ be a subtable that is not overfull; that is, at most $2^fb/(1+\delta)$ elements of the given $n$-element set are mapped into it, where again $f$ is the fingerprint length, $b$ is the number of fingerprints per cell, and $\delta$ is a constant that is yet to be determined. If an element $x$ is mapped to subtable $S$, then the location within $S$ to which it is mapped (the low-order bits of $h_1(x)$) are independent of the information causing it to be mapped to $S$ (the high-order bits of $h_1(x)$), so these locations are uniformly distributed within $S$. Additionally, the fingerprint $\phi(x)$ is uniformly distributed among all of the valid fingerprints, and the set of locations given by the bitwise exclusive or of these fingerprints with $h_1(x)$ is exactly the set of all remaining locations within $S$. Therefore, we may analyze each subtable independently, as a data structure containing at most $2^fb/(1+\delta)$ fingerprints, each of which is mapped to a uniformly random pair of distinct locations within the subtable. That is, limiting our attention to a single subtable has eliminated the dependence between the pairs of locations used by each element of the cuckoo filter.

Such a data structure behaves exactly the same as a blocked cuckoo hash table with the same elements. Thus, we can apply the previously known analysis of a blocked cuckoo hash table directly. However we must keep in mind the fact that, because we are studying events that hold with high (inverse-polynomial) probability, the reduced size of a subtable (relative to an $n$-element cuckoo hash table) translates into weaker bounds on the failure probability.

Choose $b$ and $\delta$ so that $b\ge 1+\frac{\ln(1/\delta)}{1-\ln 2}$, the same constraints on $b$ and $\delta$ made in the analysis of blocked cuckoo hash tables. With these choices, $\delta$ may be made arbitrarily small by choosing $b$ large enough, and in particular the factor $(1+\delta)^2$ appearing in the analysis of how evenly the subtables are distributed may also be made arbitrarily close to one. Then by the previous analysis of blocked cuckoo hash tables, the failure probability per insertion, for a subtable containing $\Theta(b2^f)$ elements, is $O\bigl((b2^f)^{-b}\bigr)$, inversely proportional to the $b$th power of the number of elements.

We desire the data structure to succeed with high probability; that is, for an arbitrarily chosen constant $s$, we should be able to achieve failure probability $O(1/n^s)$. This will be true when $(b2^f)^b=\Omega(n^s)$ or, taking logarithms of both sides, when
$b(f+\log b)>s\log n+O(1)$. The $b\log b$ term can be assumed to cancel the $O(1)$, giving
$f>(s\log n)/b$ as a sufficient condition for high-probability success. This matches the $f=\Omega\bigl((s\log n)/b\bigr)$ necessary condition of Fan et al.~\cite{FanAndKam-CoNEXT-14}
(based on a calculation of the probability that more than $2b$ elements hash to the same location/fingerprint pair), which also applies to this version of cuckoo filters.

We summarize the results of this subsection as a lemma:

\begin{lemma}
\label{lem:subtable-failure}
Let $b$ be any sufficiently large constant, and let $s$ also be constant. Let $n$, $N$, $\delta$, and $f$ be as defined above, and suppose that $f>(s\log n)/b$. Then the probability of failure for an insertion of an element into a simplified cuckoo filter with these parameters and without any overfull subtables  is at most $1/n^s$.
\end{lemma}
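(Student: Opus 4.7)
The plan is to reduce the analysis of a single subtable to the known analysis of blocked cuckoo hashing, and then invoke the failure-probability bound stated earlier in the excerpt.

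First, I would fix a non-overfull subtable $S$ containing at most $m=2^f b/(1+\delta)$ of the elements. The key preliminary step is to verify that, conditioned on an element $x$ being mapped to $S$, the pair of cells within $S$ available to $x$ is distributed as a uniformly random \emph{unordered pair of distinct cells of $S$}, jointly independently across the elements mapped to $S$. The first cell is determined by the low-order $f$ bits of $h_1(x)$, which are independent of the high-order bits that selected $S$ and hence uniform on $S$. The second cell is obtained by XORing with $\fp(x)$, which is uniform over the $2^f-1$ nonzero fingerprints and independent of $h_1$, so the second cell is uniform over the other $2^f-1$ cells of $S$. Crucially, this is exactly the distribution used in the analysis of a blocked cuckoo hash table on $2^f$ cells with block size $b$; the correlation between the two locations, introduced in the original cuckoo filter by $h_2$, vanishes here because after restricting to $S$ the XOR with $\fp(x)$ sweeps uniformly over the remaining cells.

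Next, I would feed the resulting configuration into the blocked cuckoo hashing bound already stated in the excerpt: with $b\ge 1+\ln(1/\delta)/(1-\ln 2)$ and load factor $\le 1-\delta$, the per-insertion failure probability on an $m$-element table is $O(m^{-b})$. Since the subtable holds $m=\Theta(b\,2^f)$ elements, this gives per-insertion failure probability $O\bigl((b2^f)^{-b}\bigr)$. Because any single insertion touches only one subtable, this is also the overall failure probability for the insertion inside a non-overfull cuckoo filter.

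Finally, I would set this bound to be $O(1/n^s)$. Requiring $(b2^f)^b \ge n^{s}$ and taking logarithms yields $bf+b\log b \ge s\log n$, so for $b$ at least a sufficiently large constant the $b\log b$ term dominates the hidden $O(1)$ and $f>(s\log n)/b$ is sufficient. The step I expect to need the most care is the independence/uniformity claim in the first paragraph: one must check that the product structure of $(h_1,\fp)$, together with the fact that $\fp$ is independent of $h_1$ and takes all nonzero values uniformly, really does make the per-element pair of in-subtable locations uniform on unordered pairs \emph{and} jointly independent across elements. Once this reduction is established, the rest is a direct appeal to prior work plus an arithmetic manipulation of the exponent.
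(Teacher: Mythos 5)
Your proposal matches the paper's own argument essentially step for step: you reduce each non-overfull subtable to a blocked cuckoo hash table by showing the in-subtable location pairs are uniform over distinct pairs and independent across elements (the same independence observation the paper makes about low-order versus high-order bits of $h_1$ and the uniformity of $\fp$), then apply the $O\bigl((b2^f)^{-b}\bigr)$ per-insertion failure bound and the same logarithmic manipulation, with $b\log b$ absorbing the constant, to get the sufficiency of $f>(s\log n)/b$. This is correct and is the paper's proof in all essentials.
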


\subsection{Overall failure probability}

Combining our results on the two failure modes of cuckoo filters, we have the following result.

\begin{theorem}
Let $n$ and $b$ be given, and let the maximum load factor for high-probability-of-success blocked cuckoo hashing with block size $b$ be $1-\delta$, where $b\ge 1+\frac{\ln(1/\delta)}{1-\ln 2}$.
Then creating a cuckoo filter for $n$ elements with block size $b$ and fingerprint size $f$ succeeds with high probability for load factor $1-\delta-\delta^2$ when $f=\Omega\bigl((\log n)/b\bigr)$. More specifically, to achieve probability $O(1/n^s)$ of failure it is sufficient for $b$ to be a sufficiently large constant and for $f$ to obey the inequality $f>(s+1)(\log n)/b$.
\end{theorem}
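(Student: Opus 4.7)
The plan is to obtain the theorem by combining the two preceding lemmas via the union bound. The failure event of a cuckoo filter construction decomposes into two modes: mode (i), in which the hash function $h_1$ distributes the $n$ elements unevenly enough that some subtable becomes overfull (so its target load factor $1-\delta$ is exceeded), and mode (ii), in which every subtable stays within its load bound but some insertion into a non-overfull subtable nevertheless fails. A union bound over these two modes will give the theorem.

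First I would verify that the hypothesis $f>(s+1)(\log n)/b$ trivially implies the much weaker condition $f=\log\bigl((\log n)/b\bigr)+\Omega(1)$ required by the overfullness lemma, since any positive multiple of $(\log n)/b$ dominates its logarithm; choosing the constant in that lemma large enough then bounds the probability of mode (i) by $1/n^{s+1}$. For mode (ii), I would condition on the complement of mode (i) and apply Lemma~\ref{lem:subtable-failure} with its exponent replaced by $s+1$, which is legitimate because the theorem's hypothesis is exactly what the lemma needs in order to bound the failure probability of a single insertion by $1/n^{s+1}$. Since the filter is built from at most $n$ insertions, a union bound over insertions bounds the probability of mode (ii) by $n\cdot 1/n^{s+1}=1/n^s$. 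A final union bound over the two modes yields the claimed $O(1/n^s)$ overall bound; the load factor $1-\delta-\delta^2$ requires no additional argument, being already encoded in the choice $N=(n/b)/(1-\delta-\delta^2)$ made in the subtable-distribution analysis.

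The only real subtlety is exponent bookkeeping: because the theorem asks for a polynomially small bound on the failure of the entire construction rather than of a single insertion, the per-insertion bound must itself be shrunk by a factor of $n$. This is precisely why the hypothesis reads $(s+1)(\log n)/b$ rather than the $s(\log n)/b$ threshold appearing in Lemma~\ref{lem:subtable-failure}, and it is the one place where one must be careful not to lose a factor in the exponent.
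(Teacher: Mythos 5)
Your proposal is correct and matches the paper's intended argument: the paper itself gives no further detail beyond ``combining our results on the two failure modes,'' and your union bound over the overfull-subtable event (Lemma on even distribution, whose $f=\log\bigl((\log n)/b\bigr)+\Omega(1)$ requirement is indeed dominated by $f>(s+1)(\log n)/b$ for constant $b$) and over the $n$ insertions via Lemma~\ref{lem:subtable-failure} with exponent $s+1$ is exactly the intended bookkeeping, including the source of the ``$+1$'' in the exponent.
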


The $1+o(1)$ factor by which the load factor $1-\delta$ of blocked cuckoo hashing differs from the load factor $1-\delta-\delta^2$ of this result is caused by the fact that, when $n$ elements are filtered into subtables, some subtables will likely be larger than their expected size. The term $\delta^2$ could be replaced here by any fixed power of $\delta$.
One way of avoiding this change in load factor altogether would be to redo the analysis of blocked cuckoo hashing based on the assumption that the input is chosen by including elements independently at random at a fixed rate, rather than that the set of input elements has a fixed size. With this modified input model, the same model with the same rate would automatically apply to each subtable, without need of Chernoff bounds. However, this refinement would make little difference to our overall results.

\section{Cuckoo filter with a stash}

The extra structure of the subtables in the simplified cuckoo filter that we analyze here also makes it easier to add a stash to this structure, to amplify its success probability.
Specifically, we add a separate stash for each subtable of the filter.
Each stash will store a collection of (location,fingerprint) pairs, for fingerprints that were not able to be stored within its subtable. Because the stash is specific to the subtable, the location part of the stash need only store the low-order $f$ bits of the location, specifying one of the two cells within the subtable that its fingerprint could have been stored in.

Recall that, in the simplified cuckoo filter,
Each fingerprint $\fp(x)$ can be stored in two locations, $h_1(x)$ and $h_1(x)\oplus\fp(x)$. However, after storing the fingerprint in the filter, we no longer know which of its two locations is $h_1$.
Therefore, in the pair that we store in the stash, we use the location with the smaller index, given by (the low-order $f$ bits of) $\min(h_1(x),h_1(x)\oplus\fp(x))$.

To query whether an element $x$ belongs to the set represented by a cuckoo filter with a stash, we test whether either of the two locations for $x$ in the filter contains the fingerprint for $x$. If not, we determine the minimum of those two locations, and search for the resulting (location,fingerprint) pair in the stash for the subtable of $x$. To insert an element $x$ into the filter, we attempt to insert it into the cuckoo filter as before, and if this fails we add it to the stash. And to delete an element $x$, we perform a query to determine which of the two locations for $x$ contains its fingerprint, or whether the fingerprint is located in the stash, and remove one copy of this fingerprint from one of its locations.

The same analysis as for the cuckoo filter without a stash goes through in the same way as before, replacing the failure probability of a blocked cuckoo hash table (used to prove \autoref{lem:subtable-failure}) with any improved failure probability (currently only conjectural) that can be proved for a blocked cuckoo hash table with a stash.

The failure probability can also be boosted, with theoretical guarantees rather than conjectures on the improvement, by increasing the block size~$b$ rather than adding a stash. Adding a stash to each subtable would increase the memory requirements of the data structure by only a lower-order term, but increasing the block size could actually decrease the memory requirements (assuming the fingerprint size $f$ is kept constant) by allowing a load factor closer to~$1$ to be used. So, given that stashes are not an improvement over increased block sizes in the guarantees they provide, in the reliability that can be obtained with them, or in the storage space they use, why would one ever use a stash? The answer lies in another feature of their analysis, their effect on the false positive rate of the data structure.

Recall that, in a cuckoo filter with block size $b$ and fingerprint size $f$, the false positive rate is at most $2b/(2^f-1)$: each query examines at most $2b$ fingerprints, each of which has a $1/(2^f-1)$ chance of colliding with a given query of an element that does not belong to the set.
The probability is at most $2b/(2^f-1)$ rather than exactly $2b/(2^f-1)$ for two reasons: a block of cells may not be full, or it may have more than one copy of the same fingerprint. However, for typical choices of parameters, neither of these reasons gives a large effect on the false positive rate.
Based on this calculation, increasing the block size from $b$ to $b'$ would also increase the false positive rate by the same $b'/b$ factor. Naively, a stash of size $\sigma$ would again increase the false positive rate to $(2b+\sigma)/(2^f-1)$, since now there are potentially $\sigma$ additional fingerprints that could collide with any given query. However, as we now prove, the stash does not actually change the false positive rate at all. The analysis below also gives a tighter analysis on the false positive rate for cuckoo hashes even without a stash, taking into account the possibility of under-full blocks.

\begin{theorem}
The false positive rate of a cuckoo filter with $b$ blocks, $n$ elements, $N$ hash table cells, fingerprint size $f$, and load factor $1-\delta=n/Nb$, regardless of whether it uses a stash or not, is at most $\tfrac{2n}{N(2^f-1)} = \tfrac{2b(1-\delta)}{2^f-1}$.
\end{theorem}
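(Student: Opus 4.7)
The plan is to replace the naive ``$2b$ probed slots, each collides with probability $1/(2^f{-}1)$'' union bound by a union bound indexed over the $n$ stored elements rather than the $2b$ probed slots. Summing over elements rather than slots automatically accounts for under-full blocks (there are only $n$ stored fingerprints, not $2b$) and, as I argue below, puts stash entries on exactly the same footing as table entries. I would fix a query element $x \notin S$; since $\phi$ and $h_1$ are independent random functions, $\phi(x)$ and $h_1(x)$ are independent of all hash values of elements of $S$.

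For each $y \in S$, I would bound the probability that the stored fingerprint of $y$ triggers a false positive on the query for $x$. Let $E_y$ be the event $\phi(y) = \phi(x)$, which has probability $1/(2^f-1)$. Conditioned on $E_y$, write $\phi := \phi(x) = \phi(y)$; the unordered pair of candidate cells for $y$ is $\{h_1(y), h_1(y) \oplus \phi\}$, and the unordered pair probed by the query for $x$ is $\{h_1(x), h_1(x) \oplus \phi\}$. Both pairs have the form $\{c, c \oplus \phi\}$, so by the involutive nature of XOR they either coincide or are disjoint: if they share one element they share both, because $\phi \neq 0$. Hence the conditional false-positive event reduces to pair-coincidence, which happens iff $h_1(y) \in \{h_1(x), h_1(x) \oplus \phi\}$, an event of probability $2/N$ by uniformity of $h_1(y)$.

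The key step --- and what I expect to be the main subtlety --- is to argue that this single pair-coincidence event simultaneously captures both the table-hit and the stash-hit cases, so that the stash does not contribute any additional false-positive probability. If the pairs coincide, then wherever $y$'s fingerprint has landed --- either in one of its two table cells (which are exactly the probed cells) or in the stash with recorded location $\min(h_1(y), h_1(y) \oplus \phi) = \min(h_1(x), h_1(x) \oplus \phi)$ --- the query for $x$ will see a match. Conversely, if the pairs are disjoint, the cell or stash entry holding $y$'s fingerprint cannot coincide with any location the query examines, either directly or through the stash lookup key. The definition of the stash coordinate as the $\min$ of the two candidate locations is exactly the canonical representative that forces these two cases to align.

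Combining, each of the $n$ stored fingerprints triggers a false positive with probability at most $\frac{2}{N(2^f-1)}$, and a union bound gives the claimed rate $\frac{2n}{N(2^f-1)} = \frac{2b(1-\delta)}{2^f-1}$. Beyond the alignment step above, the argument is just linearity of the union bound applied per element rather than per slot, and there are no further quantitative estimates to carry out.
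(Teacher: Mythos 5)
Your proposal is correct and takes essentially the same route as the paper: a union bound over the $n$ stored elements, each of which collides with the query with probability $\tfrac{2}{N(2^f-1)}$ (fingerprint match times coincidence of the two candidate cells). Your explicit argument that the stash lookup key (the $\min$ of the pair) is subsumed by the same pair-coincidence event is just a more detailed spelling-out of what the paper states tersely.
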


\begin{proof}
There are $n$ elements in the set represented by the filter, each of which independently selects a hash table location and fingerprint. A query collides with an element if and only if it has the same fingerprint and is mapped to one of the same two locations. Based on this choice, any single element has a $2/N(2^f-1)$ probability of colliding with the given query; the factor of $2$ in the numerator comes from the fact that two different choices of hash table location give rise to the same pair of cells in which the fingerprint can be stored. The result follows by the union bound.
\end{proof}

Based on this result, and assuming that adding a stash to a blocked cuckoo hash table can be proved to improve its reliability, the same improvement to reliability can be obtained in a cuckoo filter with a stash, without sacrificing the failure rate or compensating for it by increasing the fingerprint size.

\section{Bit-parallel querying}
\label{sec:bit-parallel}

A query in a cuckoo filter involves testing whether a given fingerprint $\fp(x)$ is stored in one of two cells of the filter. However, a fingerprint may have many fewer bits than a word in the machine architecture on which the filter is implemented; therefore, it may be necessary to test whether the same fingerprint appears at each of several positions in a machine word. We describe here how to perform this task in constant time (independent of the number of positions to be tested) using only bitwise binary operations and arithmetic operations on binary numbers. This analysis shows that cuckoo filter queries can take constant time even for non-constant block sizes, as long as the fingerprint length multiplied by the block size is $O(\log n)$, machine words store $\Omega(\log n)$-bit words, and multiplication takes constant time.

Suppose that each fingerprint has $f$ bits, and that $b$ fingerprints are packed into a word $w$ of $bf$ bits. We perform the following steps:
\begin{itemize}
\item Let $F=\sum_{i=0}^{b-1}2^{if}$, a word in which $b$ fingerprints are packed, all equal to the number $1$. This step can be done when the filter is created, so its calculation does not figure into the time complexity of the query algorithm.
\item Compute $q=w\oplus (\fp(x)\oplus (2^f-1))F$. The subexpression $\fp(x)\oplus (2^f-1)$ creates a fingerprint complementary to $\fp(x)$, and the subexpression $(\fp(x)\oplus (2^f-1))F$ packs $b$ copies of this complementary fingerprint into a single word. Thus, the whole expression, which combines $w$ with these complementary fingerprints by a bitwise exclusive or, gives a word packed with $b$ fingerprints which are all-ones when they match $\fp(x)$, and which have at least one zero in their binary representations when they do not match.
\item Compute $r=((q+F)\oplus q\oplus F)\mathbin{\&} 2^fF$. Here the $\&$ operator represents bitwise Boolean and. The expression $q+F$ produces a carry above each matching fingerprint, and no carry above the fingerprints that do not match. The expression $q\oplus F$ has, in the positions of each of these carry bits, a binary value representing what would be in that position if no carry occurred. The exclusive or of these two subexpressions gives the carry bits that differ from their non-carry values. Masking with $2^fF$ keeps only the bits in the carry positions of this computation, setting the other bits to zero.
\item If $r$ is zero, there is no match. Otherwise, there is a match at the position given by shifting the least significant nonzero bit of $r$ right by $f$ positions. This least significant nonzero can be calculated as $r\mathbin{\&}\mathop{\sim}(r-1)$, where $\sim$ is the bitwise complement operator.
\end{itemize}
Once a value with a nonzero bit at the position of the first match has been obtained, it is straightforward to perform other operations such as removing the fingerprint at that position; we omit the details.
We summarize the results here as a theorem:

\begin{theorem}
Suppose that a cuckoo filter with block size $b$ packs each block of $b$ fingerprints into $B$ machine words, in a model of computation in which bitwise Boolean operations and binary number arithmetic (including multiplication) take constant time per operation. Then each query in the filter may be performed in time $O(B)$.
\end{theorem}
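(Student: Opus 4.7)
The plan has two layers: a high-level reduction showing that each cuckoo filter query decomposes into $O(B)$ single-word fingerprint-matching tests, plus a correctness and complexity analysis of the bit-parallel procedure described immediately before the theorem.

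For the reduction, I would note that a query for an element $x$ inspects exactly two cells of the table, $h_1(x)$ and $h_1(x)\oplus\fp(x)$, and returns positive iff $\fp(x)$ equals any of the $2b$ stored fingerprints. Each cell occupies $B$ machine words, so it suffices to show that, for a single word $w$ holding $b$ packed $f$-bit fingerprints, we can test in $O(1)$ whether any packed fingerprint equals $\fp(x)$.

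Next I would justify the setup $q = w \oplus (\fp(x)\oplus(2^f-1))F$. The constant $F=\sum_{i=0}^{b-1}2^{if}$ is precomputed and has a single $1$ in the lowest bit of each fingerprint slot; since $bf$ fits in a word, multiplying the $f$-bit slotwise complement $\fp(x)\oplus(2^f-1)$ by $F$ replicates it into every slot without overflow, and XORing with $w$ yields a word $q$ whose $i$-th slot equals $w_i\oplus\overline{\fp(x)}$, which is $2^f-1$ (all ones within the slot) iff $w_i=\fp(x)$.

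The heart of the proof is the carry-detection step $r=((q+F)\oplus q\oplus F)\mathbin{\&}(2^fF)$. I would invoke the standard identity that, for any two words $a$ and $b$, the expression $(a+b)\oplus a\oplus b$ has at bit position $i$ exactly the carry-in to that position during the binary addition $a+b$, since addition agrees with XOR except where a carry arrives. Hence $(q+F)\oplus q\oplus F$ is the word of carry-in bits during $q+F$, and masking by $2^fF=\sum_{i=0}^{b-1}2^{(i+1)f}$ retains exactly the carries escaping the top of each slot. A short case analysis then shows the lowest set bit of $r$ always marks a genuine match: slot $i$ carries out only if $q_i+1+c_i\ge 2^f$ where $c_i$ is its carry-in, which forces either $q_i=2^f-1$ (a genuine match) or $c_i=1$ (so some strictly lower slot must itself have carried out, ruling this case out for the lowest bit of $r$). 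Consequently $r=0$ iff no match exists, and the position of the lowest set bit of $r$, isolated by $r\mathbin{\&}\mathop{\sim}(r-1)$ and right-shifted by $f$, indexes a matching slot, enough both to answer the query and to locate a fingerprint for removal.

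Each of the four steps uses only a constant number of arithmetic, Boolean, and shift operations on machine words, so performing the test on each of the $2B$ words of the two queried cells gives total time $O(B)$. The main technical obstacle will be the carry-cascade case analysis: one has to rule out the possibility that a carry propagating upward from a genuine match causes the lowest set bit of $r$ to correspond to a non-matching slot, and the monotone upward direction of carry propagation is precisely what makes this impossible.
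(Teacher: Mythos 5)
Your proposal is correct and follows essentially the same route as the paper: the same precomputed constant $F$, the same words $q$ and $r$, and the same lowest-set-bit extraction, applied in $O(1)$ time to each of the $O(B)$ words of the two queried cells. Your explicit carry-cascade case analysis is in fact slightly more careful than the paper's informal claim that non-matching slots produce no carries, and it correctly justifies why the least significant set bit of $r$ always marks a genuine match.
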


\bibliographystyle{plainurl}
\bibliography{cuckoo}

\end{document}